\documentclass[letterpaper,10pt,conference]{ieeeconf}
\IEEEoverridecommandlockouts
\usepackage{tabularx}
\usepackage{longtable}
\usepackage{amsmath,amssymb}
\usepackage{graphicx}
\usepackage{cite}
\usepackage{url}
\usepackage{algorithm}
\usepackage{algpseudocode}
\usepackage{booktabs}
\usepackage{multirow}
\usepackage{float}
\newtheorem{theorem}{Theorem}
\newtheorem{remark}{Remark}
\newtheorem{corollary}{Corollary}

\begin{document}

\title{Forward-Invariant Policy Classes for Safe Reinforcement Learning in Multicopter Control}

\author{%
Chieh~Tsai$^{1}$,
Muhammad~Junayed~Hasan~Zahed$^{2}$,
Jinzhi~Shen$^{1}$,
Yi~Xie$^{1}$,
Ruoshan~Lan$^{3}$,\\[1mm]
Majed~Obaid$^{4}$,
Salim~Hariri$^{1}$,
and~Hossein~Rastgoftar$^{1,2}$%
\thanks{$^{1}$Department of Electrical and Computer Engineering,
University of Arizona, Tucson, AZ 85719, USA
(e-mail: \{hariri, jinzhis9, yix\}@arizona.edu).}%
\thanks{$^{2}$Department of Aerospace and Mechanical Engineering,
University of Arizona, Tucson, AZ 85719, USA
(e-mail: \{mjhz, hrastgoftar\}@arizona.edu).}%
\thanks{$^{3}$Department of Computer Science,
University of Arizona, Tucson, AZ 85719, USA
(e-mail: ruoshanlan@arizona.edu).}%
\thanks{$^{4}$Community Medicine and Primary Care Department,
College of Medicine, Umm Al-Qura University.}%
}
\maketitle

\begin{abstract}
This paper proposes a reinforcement learning (RL) framework for
safe gain scheduling based on forward-invariance-induced action-space
design. Under stated nominal-model and inversion-domain assumptions,
rather than enforcing safety through runtime shielding
or penalty-based constraints, safety is embedded directly into
the policy class. Specifically, we construct a finite library of
feedback controllers sharing a common Lyapunov certificate
that establishes forward invariance of a prescribed admissible
set under arbitrary switching. Consequently, any policy whose
actions are restricted to this library, including policies
encountered during RL exploration, inherits the same
certificate. Policy optimization can therefore focus on
closed-loop performance without runtime safety filtering or
action projection. The framework is instantiated for
quadcopter hover regulation, where a DQN schedules among
certified feedback controllers. Nonlinear MuJoCo simulations
demonstrate state-dependent gain scheduling and empirically
evaluate robustness to wind, model mismatch, sensor noise,
and sensing delay. The results illustrate how safety
certification can be separated from policy optimization by
learning over a forward-invariant policy class.
\end{abstract}

\section{Introduction}

Reinforcement learning (RL) offers a promising framework for adaptive control of nonlinear dynamical systems under changing operating conditions. However, in safety-critical systems such as quadcopters, autonomous vehicles, and robotic platforms, learning must improve performance without violating hard safety constraints during training or deployment. Existing methods commonly enforce safety through penalties, shields, barrier constraints, or online filters, treating safety as an external correction rather than an intrinsic property of the policy class. 
This challenge is particularly important for quadcopters, where nonlinear dynamics, underactuation, actuator limits, and attitude constraints complicate adaptive control across operating regimes. Model-based controllers can provide strong stability and tracking guarantees, but their gains are typically fixed offline. A single gain configuration must therefore balance rapid transient rejection, precise hover regulation, and disturbance robustness, often resulting in conservative performance.

This paper develops a safe RL framework that embeds safety directly into the action space. The closed-loop system is formulated as an MDP with a finite set of pre-certified stabilizing feedback laws, each preserving forward invariance of a prescribed safe set. Thus, every policy over the admissible action space is safety-preserving by construction, eliminating the need for external safety corrections. The framework is demonstrated for quadcopter hover regulation. Figure~\ref{fig:framework_overview} summarizes the workflow: offline certification constructs the admissible controller library and supports policy training, while online DQN decisions are restricted to this certified library.

\subsection{Related Work}

Quadcopter trajectory tracking has been extensively studied using nonlinear, model-based control. Early works established modeling and low-level control through PID/LQ, backstepping, and experimental platforms \cite{Bouabdallah2004ICRA,Bouabdallah2004IROS,Castillo2004TCST,Madani2006IROS}, while subsequent studies addressed underactuation, actuator dynamics, and full-pose stabilization \cite{Hoffmann2007AIAA,Bouabdallah2007IROS,Nagaty2013JIRS}. More recent geometric and differential-flatness methods provide strong tracking guarantees and enable aggressive flight while avoiding Euler-angle singularities \cite{Lee2010CDC,Mellinger2011ICRA,Faessler2018RAL}. Despite their strong performance, these methods typically use gains fixed offline, which can be conservative across large transients, aggressive maneuvers, and near-hover operation.

\begin{figure}[!t]
  \centering
  \includegraphics[width=\columnwidth,trim=0 5bp 105bp 5bp,clip]{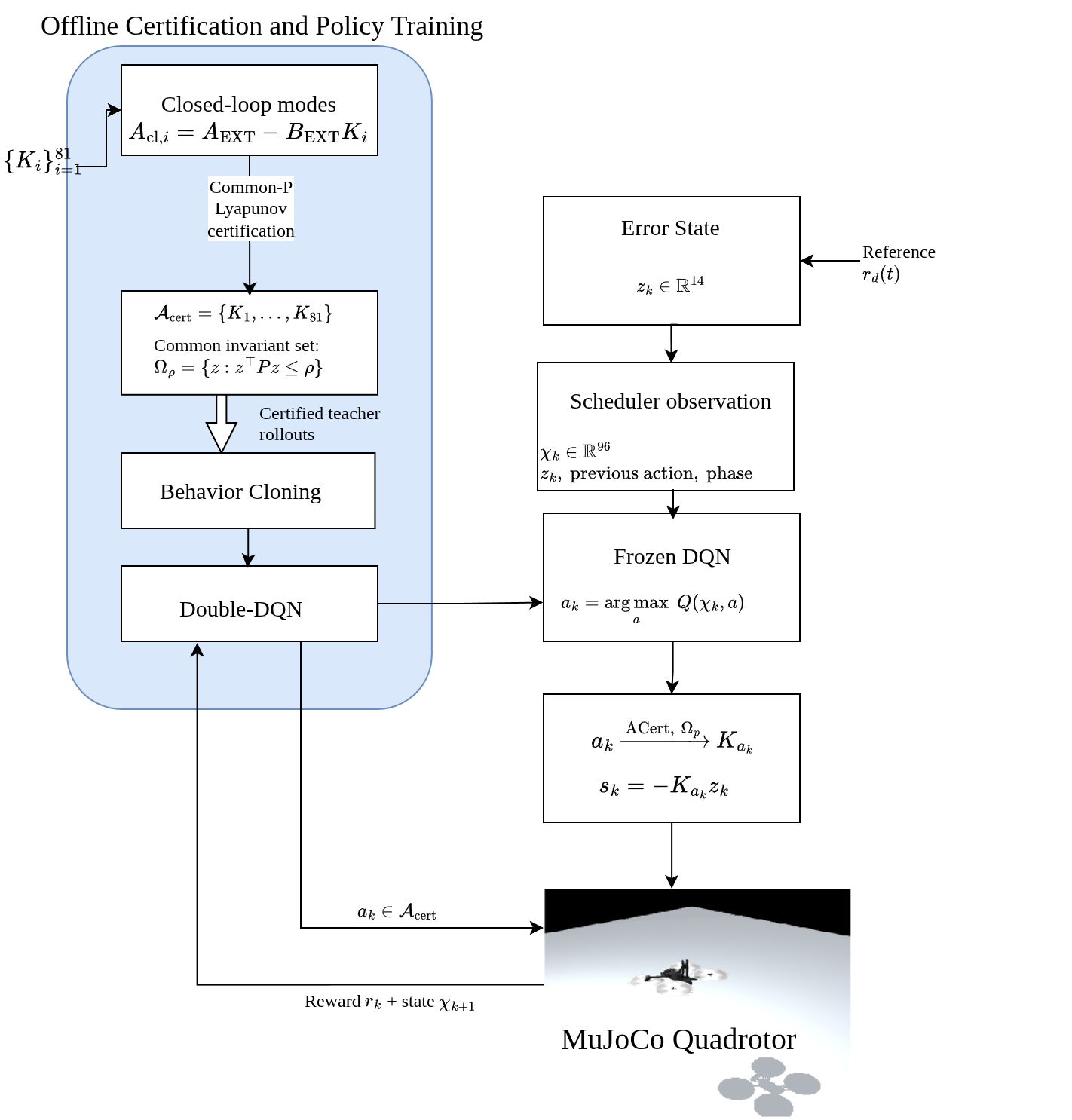}
  \caption{Offline common-Lyapunov certification and policy training (left), followed by online execution of the frozen DQN scheduler on the MuJoCo quadrotor (right).}
  \label{fig:framework_overview}
\end{figure}

Gain scheduling adapts controller aggressiveness across changing operating conditions. For quadrotors, gain-scheduled PID methods have improved fault-tolerant control and path tracking under actuator degradation and regime changes \cite{Milhim2010AIAA,Amoozgar2012IFAC}. However, existing approaches often rely on heuristic interpolation, fuzzy supervision, fault logic, or manually tuned switching rules \cite{Milhim2010AIAA,Amoozgar2012IFAC}, generally without formal guarantees that online gain updates preserve the safety envelope of the nonlinear closed-loop system.

This issue is closely tied to forward invariance. In constrained nonlinear
control, forward invariance requires that trajectories initialized in an
admissible set remain in that set for all future time
\cite{Blanchini1999Automatica}. Barrier-certificate methods established an
important verification framework for safety with respect to unsafe sets
\cite{Prajna2007TAC}, while control barrier function formulations provided
constructive real-time tools for enforcing invariant safe sets online
\cite{Ames2017TAC}. Related safe-learning work has also emphasized certified
regions of attraction, safe set expansion, and model-based stability
guarantees for uncertain nonlinear systems
\cite{Berkenkamp2016CDC,Berkenkamp2017NeurIPS}. For quadcopters, this
viewpoint is particularly important because position, attitude, angular-rate,
and thrust constraints are not merely soft performance targets; violating them
can compromise feasibility and invalidate controller assumptions. These
limitations motivate the use of admissible gain sets whose elements are certified
to preserve forward invariance of the closed-loop quadcopter dynamics.

From the RL viewpoint, sequential decision making is naturally formulated as
an MDP for state evolution, action selection, and long-horizon optimization
\cite{Bellman1957,Puterman1994,SuttonBarto2018}. Safe RL typically augments
this formulation with safety constraints
\cite{Achiam2017ICML,Chow2018NeurIPS}, often enforced through runtime
projection, penalties, Lyapunov constraints, or barrier filtering. In contrast,
we restrict the action space itself to safety-certified controllers.
Learning-based flight control can reduce manual tuning and compensate for modeling errors. RL has shown promising performance for UAV attitude control \cite{Koch2019TCPS}, while deep Q-learning established value-based learning over discrete action spaces \cite{Mnih2015Nature}. Safe RL has introduced constrained, Lyapunov-based, stability-certified, and barrier-function approaches \cite{Achiam2017ICML,Berkenkamp2017NeurIPS,Chow2018NeurIPS,Cheng2019AAAI}. However, most RL-based quadrotor controllers learn control actions or adapt controller parameters without embedding formal invariance guarantees into the gain-selection mechanism. In contrast, this work formulates safe gain scheduling as an MDP with a safety-certified action space. Rather than learning control commands directly, the DQN selects gain vectors from a finite library of pre-certified stabilizing controllers. Since each action preserves forward invariance of the prescribed safe set, every policy explored during training and deployment is safety-preserving by construction. This eliminates runtime safety correction and provides an interpretable, verification-friendly framework for adaptive quadcopter control.

\subsection{Contributions}

We consider the problem of safe learning for dynamical systems with uncertain models through control-theoretic invariance. By modeling the system evolution as a Markov decision process (MDP), safety is embedded directly into the decision-making structure rather than enforced through auxiliary mechanisms. In contrast to existing approaches, the proposed framework establishes safety at the level of the action space, yielding the following contributions:

\textbf{Contribution 1: Forward-Invariant Action Design for Safe Learning.} We construct a finite action space of feedback laws sharing a common Lyapunov certificate. The resulting certificate establishes forward invariance of a prescribed admissible set under arbitrary switching among the actions. Consequently, safety is independent of the action-selection policy and holds uniformly during both exploration and deployment.

\textbf{Contribution 2: Policy Optimization over an Invariant Policy Class.} Given the certified action space, policy optimization is
performed entirely within a forward-invariant policy class. Thus, the learning algorithm affects performance but not the underlying safety certificate. No runtime safety filter, action projection, dwell-time constraint, or safety penalty is required to preserve the certified set under the nominal closed-loop model.

\textbf{Contribution 3: Safety-Certified Gain Scheduling for Quadcopter Control.} We instantiate the framework for quadcopter hover control,
where a DQN schedules among a finite library of feedback controllers sharing a common invariance certificate. Nonlinear MuJoCo simulations evaluate nominal performance and empirically examine robustness to wind, model mismatch, sensor noise, and sensing delay.

\section{Problem Statement}
\label{Problem Statement}

We consider safe learning for a discrete-time dynamical system in
which safety is guaranteed by design, independently of the learning
process. To this end, we formulate an MDP defined by the tuple
\[
\mathcal{M}
=
(\mathcal{X},\mathcal{A},F,r,\gamma),
\]
where $\mathcal{X}\subseteq\mathbb{R}^n$ is a continuous state
space, $\mathcal{A}$ is a finite action set,
$r:\mathcal{X}\times\mathcal{A}\to\mathbb{R}$ is the stage reward,
and $\gamma\in(0,1)$ is the discount factor. Let
$\boldsymbol{\xi}_k\in\mathcal{X}$ denote the MDP state at decision
instant $k$. The system evolves according to
\begin{equation}
\boldsymbol{\xi}_{k+1}
=
F(\boldsymbol{\xi}_k,a_k),
\qquad
a_k\in\mathcal{A},
\label{Discrete}
\end{equation}
where
$F:\mathcal{X}\times\mathcal{A}\to\mathcal{X}$ is the closed-loop
transition map.

Rather than enforcing safety through online constraint handling or
action filtering, we embed safety directly into the admissible
action set.

\noindent\textbf{Problem 1 (Invariance-Induced Action Space Design).}
Construct a feedback parameterization and a finite action set
$\mathcal{A}$ such that
\begin{equation}
F(\boldsymbol{\xi},a)\in\mathcal{X},
\qquad
\forall
(\boldsymbol{\xi},a)\in
\mathcal{X}\times\mathcal{A}.
\label{eq:invariance_action}
\end{equation}
Then $\mathcal{X}$ is forward invariant under every admissible
action sequence, i.e.,
\begin{equation}
\boldsymbol{\xi}_0\in\mathcal{X}
\quad\Longrightarrow\quad
\boldsymbol{\xi}_k\in\mathcal{X},
\qquad
\forall k\geq 0.
\label{eq:discrete_invariance}
\end{equation}

\noindent\textbf{Problem 2 (Learning over an Invariant Policy Class).}
Given the action set $\mathcal{A}$ from Problem~1, determine a policy
\[
\pi:\mathcal{X}\to\mathcal{A}
\]
that maximizes
\begin{equation}
J_\pi(\boldsymbol{\xi}_0)
=
\mathbb{E}
\left[
\sum_{k=0}^{\infty}
\gamma^k
r\bigl(
\boldsymbol{\xi}_k,
\pi(\boldsymbol{\xi}_k)
\bigr)
\right].
\label{eq:policy_objective}
\end{equation}
Since every admissible action preserves $\mathcal{X}$, learning
reduces to policy optimization over a forward-invariant policy
class.

As a case study, we consider gain-scheduled quadcopter hover
regulation. The physical quadcopter state is denoted by
$\mathbf{x}$, while the MDP state $\boldsymbol{\xi}_k$ is
instantiated by the sampled tracking-error state
$\mathbf{z}_k=\mathbf{z}(t_k)$. Each action selects a feedback
controller from a finite invariance-certified library.
\section{Case Study: Quadcopter Hover Achievement}
\label{Case Study: Quadcopter Hover Regulation}

% We instantiate the proposed framework on quadcopter hover regulation. The objective is to drive the vehicle to a desired hover equilibrium while preserving safety. We proceed in two steps: first, we construct an invariance-certified action set; second, we derive a discrete-time transition model for learning.

% \subsection{Forward-Invariance-Preserving Control Design}

We consider the control-affine quadcopter dynamics
\begin{equation}
\dot{\mathbf{x}}=\mathbf{f}_0(\mathbf{x})+\mathbf{G}_0\mathbf{u},
\end{equation}
where
$
\mathbf{x}
=
\begin{bmatrix}
\mathbf{r}^\top &
\mathbf{v}^\top &
\boldsymbol{\eta}^\top &
\dot{\boldsymbol{\eta}}^\top &
T &
\dot T
\end{bmatrix}^\top
\in\mathbb{R}^{14}$ is the state and 

\[
\mathbf{u}
=
\begin{bmatrix}
u_T & u_\phi & u_\theta & u_\psi
\end{bmatrix}^\top
\in\mathbb{R}^{4}.
\]
is the input. Here, $\mathbf{r}\in\mathbb{R}^3$ and $\mathbf{v}\in\mathbb{R}^3$ denote position and velocity, $\boldsymbol{\eta}=[\phi,\theta,\psi]^\top$ is the Euler-angle vector, and $T$ and $\dot T$ are the thrust deviation and thrust rate, respectively. The drift vector field and input matrix are
\begin{equation}
\mathbf{f}_0(\mathbf{x})=
\begin{bmatrix}
\mathbf{v} \\[3pt]
-\,g\hat{\mathbf e}_3+\dfrac{mg+T}{m}\mathbf{R}(\boldsymbol{\eta})\hat{\mathbf e}_3\\[6pt]
\dot{\boldsymbol{\eta}} \\[3pt]
\mathbf{0}_{3} \\[3pt]
\dot T \\[3pt]
0
\end{bmatrix},
~
\mathbf{G}_0=
\begin{bmatrix}
\mathbf{0}_{9\times 1} & \mathbf{0}_{9\times 3}\\
\mathbf{0}_{3\times 1} & \mathbf{I}_3\\
0 & \mathbf{0}_{1\times 3}\\
1 & \mathbf{0}_{1\times 3}
\end{bmatrix}.
\end{equation}
Here, $g>0$ is the gravitational constant, $m>0$ is the vehicle mass, $\hat{\mathbf e}_3=[0~0~1]^\top$, and $\mathbf{R}(\boldsymbol{\eta})\in SO(3)$ is the rotation matrix from body to inertial coordinates.

Let $\mathbf r_I^*\in\mathbb R^3$ denote the desired hover position, and let $\mathbf r_d(t)$ be a smooth reference trajectory from $\mathbf r(0)$ to $\mathbf r_I^*$ with bounded derivatives up to fourth order. Define the tracking errors
\[
\mathbf e_r=\mathbf r-\mathbf r_d,~
\mathbf e_v=\dot{\mathbf{r}}- \dot{\mathbf{r}}_d,~
\mathbf e_a=\ddot{\mathbf r}-\ddot{\mathbf r}_d,~
\mathbf e_j=\dddot{\mathbf r}_d-\ddot{\mathbf r}_d,~
\]
and the external tracking-error state
\begin{equation}
\mathbf z=
\begin{bmatrix}
\mathbf e_r^\top &
\mathbf e_v^\top &
\mathbf e_a^\top &
\mathbf e_j^\top &
\psi &
\dot\psi
\end{bmatrix}^\top
\in\mathbb R^{14}.
\end{equation}
Using differential flatness and dynamic inversion, the external error dynamics can be written as
\begin{equation}
\dot{\mathbf z}
=
\mathbf A_{\mathrm{EXT}}\mathbf z
+
\mathbf B_{\mathrm{EXT}}\mathbf s
-
\mathbf B_{\mathrm{EXT}}\mathbf r_d^{(4)}(t),
\end{equation}
where $\mathbf s\in\mathbb R^4$ is the virtual input,
\[
\mathbf A_{\mathrm{EXT}}=
\begin{bmatrix}
\mathbf{0}_{3} & \mathbf{I}_{3} & \mathbf{0}_{3} & \mathbf{0}_{3} & \mathbf{0} & \mathbf{0}\\
\mathbf{0}_{3} & \mathbf{0}_{3} & \mathbf{I}_{3} & \mathbf{0}_{3} & \mathbf{0} & \mathbf{0}\\
\mathbf{0}_{3} & \mathbf{0}_{3} & \mathbf{0}_{3} & \mathbf{I}_{3} & \mathbf{0} & \mathbf{0}\\
\mathbf{0}_{3} & \mathbf{0}_{3} & \mathbf{0}_{3} & \mathbf{0}_{3} & \mathbf{0} & \mathbf{0}\\
\mathbf{0} & \mathbf{0} & \mathbf{0} & \mathbf{0} & 0 & 1\\
\mathbf{0} & \mathbf{0} & \mathbf{0} & \mathbf{0} & 0 & 0
\end{bmatrix},
~
\mathbf B_{\mathrm{EXT}}=
\begin{bmatrix}
\mathbf{0} & 0\\
\mathbf{0} & 0\\
\mathbf{0} & 0\\
\mathbf{I}_3 & 0\\
\mathbf{0} & 0\\
\mathbf{0} & 1
\end{bmatrix}.
\]
The first three inputs of $\mathbf s$ act on the translational snap dynamics, and the fourth input acts on the yaw acceleration dynamics.

\subsection{Stability and Forward Invariance}
We consider a family of feedback laws parameterized by
\[
\mathbf k=\left[k_1~\cdots~k_{14}\right]^T\in\mathcal K\subset\mathbb R^{14},
\qquad
k_i\in[k_{i,\min},k_{i,\max}],
\]
for $i=1,\cdots,14$, and choose
\begin{equation}
\mathbf s=-\mathbf K\mathbf z,
\end{equation}
where $\mathbf K\in\mathbb R^{4\times 14}$ is constructed from $\mathbf k$. The resulting closed-loop dynamics are
\begin{equation}
\dot{\mathbf z}
=
\mathbf A_{\mathrm{cl}}(\mathbf k)\mathbf z
-
\mathbf B_{\mathrm{EXT}}\mathbf r_d^{(4)}(t),
\qquad
\mathbf A_{\mathrm{cl}}(\mathbf k)
\triangleq
\mathbf A_{\mathrm{EXT}}-\mathbf B_{\mathrm{EXT}}\mathbf K.
\end{equation}

\begin{theorem}
Consider the finite family of closed-loop error dynamics
\begin{equation}
\dot{\mathbf z}
=
\mathbf A_i \mathbf z
-
\mathbf B_{\mathrm{EXT}}\mathbf r_d^{(4)}(t),
\qquad i\in\{1,\ldots,N\},
\label{eq:switched_error_dynamics}
\end{equation}
where
$
\mathbf A_i
=
\mathbf A_{\mathrm{EXT}}
-
\mathbf B_{\mathrm{EXT}}\mathbf K_i,
$
and the finite controller library is
\[
\mathcal A
=
\{\mathbf K_1,\ldots,\mathbf K_N\}.
\]
Assume that there exist matrices
\[
\mathbf P=\mathbf P^\top\succ 0,
\qquad
\mathbf Q=\mathbf Q^\top\succ 0,
\]
such that
\begin{equation}
\mathbf A_i^\top \mathbf P
+
\mathbf P\mathbf A_i
\preceq
-\mathbf Q,
\qquad
\forall i\in\{1,\ldots,N\},
\label{eq:common_lyapunov_lmi}
\end{equation}
and that
\begin{equation}
\|\mathbf r_d^{(4)}(t)\|
\le
\bar r_4,
\qquad
\forall t\ge 0.
\label{eq:reference_bound}
\end{equation}
Then the switched closed-loop system is uniformly input-to-state
stable with respect to $\mathbf r_d^{(4)}(t)$ under arbitrary
switching among the controllers in $\mathcal A$. In particular, the common quadratic Lyapunov function
\begin{equation}
V(\mathbf z)
=
\mathbf z^\top\mathbf P\mathbf z
\label{eq:common_V}
\end{equation}
satisfies, for every admissible controller $\mathbf K_i$,
\begin{equation}
\dot V(\mathbf z)
\le
-\alpha\|\mathbf z\|^2
+
\beta\|\mathbf r_d^{(4)}(t)\|^2,
\label{eq:common_dissipation}
\end{equation}
where, for any
$\varepsilon\in(0,\lambda_{\min}(\mathbf Q))$,
\begin{equation}
\alpha
=
\lambda_{\min}(\mathbf Q)-\varepsilon>0,
\qquad
\beta
=
\frac{\|\mathbf P\mathbf B_{\mathrm{EXT}}\|^2}{\varepsilon}.
\label{eq:alpha_beta}
\end{equation}
Consequently, the same dissipation inequality holds independently
of the switching sequence.
\end{theorem}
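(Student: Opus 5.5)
The proof is a direct Lyapunov computation with the common function $V(\mathbf z)=\mathbf z^\top\mathbf P\mathbf z$ from \eqref{eq:common_V}, carried out for a fixed but arbitrary mode $i$. Because the resulting bound will not depend on $i$, it holds along any switching signal.

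\textbf{Step 1: derivative along a mode.} Fix $i\in\{1,\ldots,N\}$ and write $\mathbf w(t)=\mathbf r_d^{(4)}(t)$. Differentiating $V$ along \eqref{eq:switched_error_dynamics} gives
\[
\dot V=\mathbf z^\top(\mathbf A_i^\top\mathbf P+\mathbf P\mathbf A_i)\mathbf z-2\mathbf z^\top\mathbf P\mathbf B_{\mathrm{EXT}}\mathbf w .
\]
By \eqref{eq:common_lyapunov_lmi}, the quadratic term is at most $-\mathbf z^\top\mathbf Q\mathbf z\le-\lambda_{\min}(\mathbf Q)\|\mathbf z\|^2$.

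\textbf{Step 2: the cross term.} Apply Cauchy--Schwarz and then Young's inequality $2ab\le\varepsilon a^2+b^2/\varepsilon$ with $a=\|\mathbf z\|$ and $b=\|\mathbf P\mathbf B_{\mathrm{EXT}}\|\,\|\mathbf w\|$:
\[
2|\mathbf z^\top\mathbf P\mathbf B_{\mathrm{EXT}}\mathbf w|\le 2\|\mathbf z\|\,\|\mathbf P\mathbf B_{\mathrm{EXT}}\|\,\|\mathbf w\|\le\varepsilon\|\mathbf z\|^2+\frac{\|\mathbf P\mathbf B_{\mathrm{EXT}}\|^2}{\varepsilon}\|\mathbf w\|^2 .
\]
Combining this with Step 1 yields \eqref{eq:common_dissipation} with the constants $\alpha,\beta$ of \eqref{eq:alpha_beta}. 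Since $\varepsilon<\lambda_{\min}(\mathbf Q)$, we have $\alpha>0$. Neither $\alpha$ nor $\beta$ depends on $i$.

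\textbf{Step 3: ISS under arbitrary switching.} Let $\sigma(t)$ be any piecewise-constant switching signal. The closed loop is linear in $\mathbf z$ with bounded input, so solutions are absolutely continuous and $V(\mathbf z(t))$ is absolutely continuous. For almost every $t$, $\dot V$ equals the mode-$\sigma(t)$ derivative, hence satisfies the uniform bound. This is the one technical point, and it is benign: the argument uses no mode-dependent Lyapunov function, so $V$ does not jump at switching instants and no dwell-time condition is needed.

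Next use $\lambda_{\min}(\mathbf P)\|\mathbf z\|^2\le V\le\lambda_{\max}(\mathbf P)\|\mathbf z\|^2$ to get
\[
\dot V\le-\frac{\alpha}{\lambda_{\max}(\mathbf P)}V+\beta\|\mathbf w\|^2 .
\]
The comparison lemma (Gr\"onwall) then gives
\[
V(t)\le e^{-ct}V(0)+\frac{\beta}{c}\sup_{s\le t}\|\mathbf w(s)\|^2,\qquad c=\frac{\alpha}{\lambda_{\max}(\mathbf P)} .
\]
Converting back to $\|\mathbf z\|$ produces a $\mathcal{KL}$ bound on the initial condition plus a $\mathcal K$ gain on $\sup\|\mathbf w\|$, which is uniform ISS independent of $\sigma$. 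Under \eqref{eq:reference_bound}, this yields the ultimate bound $\|\mathbf z\|^2\le \beta\lambda_{\max}(\mathbf P)\bar r_4^2/(\alpha\lambda_{\min}(\mathbf P))$ asymptotically. It also shows that sublevel sets $\{V\le c_0\}$ with $c_0\ge \beta\lambda_{\max}(\mathbf P)\bar r_4^2/\alpha$ are forward invariant under every switching sequence, which is the property used later.
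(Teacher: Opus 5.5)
Your proposal is correct and follows the paper's proof essentially step for step: the common $V=\mathbf z^\top\mathbf P\mathbf z$, the bound $-\lambda_{\min}(\mathbf Q)\|\mathbf z\|^2$ from the common LMI, Cauchy--Schwarz and Young's inequality with the same $a,b$, and the passage to $\dot V\le-\frac{\alpha}{\lambda_{\max}(\mathbf P)}V+\beta\|\mathbf r_d^{(4)}\|^2$. Your comparison-lemma integration and your absolute-continuity remark at switching instants make rigorous two steps the paper only asserts: that this differential inequality gives the uniform ISS estimate, and that sublevel sets are invariant as in Corollary~\ref{cor:forward_invariance}.
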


\begin{proof}
Consider the common quadratic Lyapunov function
\[
V(\mathbf z)=\mathbf z^\top\mathbf P\mathbf z.
\]
Because $\mathbf P\succ0$,
\begin{equation}
\lambda_{\min}(\mathbf P)\|\mathbf z\|^2
\le
V(\mathbf z)
\le
\lambda_{\max}(\mathbf P)\|\mathbf z\|^2.
\label{eq:V_bounds}
\end{equation}
For any active controller $\mathbf K_i\in\mathcal A$,
differentiating $V$ along \eqref{eq:switched_error_dynamics}
gives
\begin{align}
\dot V
&=
\mathbf z^\top
\left(
\mathbf A_i^\top\mathbf P
+
\mathbf P\mathbf A_i
\right)
\mathbf z
-
2\mathbf z^\top
\mathbf P\mathbf B_{\mathrm{EXT}}
\mathbf r_d^{(4)}(t).
\end{align}
Using \eqref{eq:common_lyapunov_lmi},
\begin{align}
\dot V
&\le
-\mathbf z^\top\mathbf Q\mathbf z
+
2
\|\mathbf P\mathbf B_{\mathrm{EXT}}\|
\|\mathbf z\|
\|\mathbf r_d^{(4)}(t)\|\\
&\le
-\lambda_{\min}(\mathbf Q)\|\mathbf z\|^2
+
2
\|\mathbf P\mathbf B_{\mathrm{EXT}}\|
\|\mathbf z\|
\|\mathbf r_d^{(4)}(t)\|.
\end{align}
Applying Young's inequality,
\[
2ab
\le
\varepsilon a^2
+
\frac{1}{\varepsilon}b^2,
\qquad \varepsilon>0,
\]
with
\[
a=\|\mathbf z\|,
\qquad
b=
\|\mathbf P\mathbf B_{\mathrm{EXT}}\|
\|\mathbf r_d^{(4)}(t)\|,
\]
yields
\begin{align}
\dot V
\le
-&
\left(
\lambda_{\min}(\mathbf Q)-\varepsilon
\right)\|\mathbf z\|^2
\nonumber+
\frac{\|\mathbf P\mathbf B_{\mathrm{EXT}}\|^2}
{\varepsilon}
\|\mathbf r_d^{(4)}(t)\|^2.
\end{align}
Choosing
\[
0<\varepsilon<\lambda_{\min}(\mathbf Q)
\]
and defining $\alpha$ and $\beta$ as in
\eqref{eq:alpha_beta} gives
\[
\dot V
\le
-\alpha\|\mathbf z\|^2
+
\beta\|\mathbf r_d^{(4)}(t)\|^2.
\]

Importantly, $\mathbf P$, $\alpha$, and $\beta$ are common to
all controllers in $\mathcal A$. Therefore, the dissipation
inequality holds regardless of which admissible controller is
active and is preserved under arbitrary switching among the
controllers. Using
\[
\|\mathbf z\|^2
\ge
\frac{V(\mathbf z)}{\lambda_{\max}(\mathbf P)},
\]
we further obtain
\begin{equation}
\dot V
\le
-
\frac{\alpha}{\lambda_{\max}(\mathbf P)}V
+
\beta\|\mathbf r_d^{(4)}(t)\|^2.
\label{eq:V_ISS}
\end{equation}
Hence, the switched system is uniformly input-to-state stable
with respect to $\mathbf r_d^{(4)}(t)$ under arbitrary switching.
For the bounded input in \eqref{eq:reference_bound},
\eqref{eq:V_ISS} also establishes uniform ultimate boundedness.
\end{proof}
Theorem~1 establishes that the common Lyapunov certificate
is independent of the switching policy. Because the same
$\mathbf P$ certifies every controller in $\mathcal A$, the ISS
bound holds under arbitrary switching, including the
$\epsilon$-greedy action sequences encountered during DQN
training. Forward invariance of a prescribed admissible set
then follows by selecting an appropriate common Lyapunov
sublevel set, as established next.

\begin{remark}
\label{rem:cqlf_tradeoff}
Requiring a common $\mathbf P\succ0$ is more restrictive than
using mode-dependent Lyapunov functions, but it provides a
key advantage: the certificate remains valid under arbitrary
switching. Thus, RL may change controllers at any decision
instant without imposing dwell-time or hysteresis constraints.
The resulting conservatism is the price of obtaining a
policy-independent certificate.
\end{remark}
\noindent\textbf{Certified tracking-error admissibility:}
Let $\mathcal Z_{\mathrm{safe}}\subset\mathbb R^{14}$ denote the
prescribed admissible tracking-error set,
\begin{equation}
\mathcal{Z}_{\mathrm{safe}}
=
\left\{
\mathbf z\in\mathbb{R}^{14}:
|z_\ell|\le \bar z_\ell,\;
\ell=1,\ldots,14
\right\},
\label{eq:Z_safe}
\end{equation}
where $\bar z_\ell>0$ specifies the admissible bound on the
corresponding component of $\mathbf z$. Here, safety refers specifically to satisfaction of the
prescribed tracking-error bounds defining
$\mathcal Z_{\mathrm{safe}}$; additional physical constraints
not represented in $\mathcal Z_{\mathrm{safe}}$ are not implied
by this certificate. Define the largest
Lyapunov sublevel value whose ellipsoid is contained in
$\mathcal{Z}_{\mathrm{safe}}$ as
\begin{equation}
\rho_{\mathrm{safe}}
=
\min_{\ell=1,\ldots,14}
\frac{\bar z_\ell^2}
{\hat{\mathbf{e}}_\ell^\top\mathbf P^{-1}\hat{\mathbf{e}}_\ell},
\label{eq:rho_safe}
\end{equation}
where $\hat{\mathbf{e}}_\ell$ is the $\ell$th standard basis vector.
Then
\begin{equation}
\Omega_\rho
=
\left\{
\mathbf z\in\mathbb{R}^{14}:
V(\mathbf z)\le\rho
\right\}
\subseteq
\mathcal{Z}_{\mathrm{safe}}
\end{equation}
for every $\rho\le\rho_{\mathrm{safe}}$.

\begin{corollary}
\label{cor:forward_invariance}
Under the conditions of Theorem~1, suppose that
\begin{equation}
\frac{\lambda_{\max}(\mathbf P)\beta}{\alpha}
\bar r_4^2
\le
\rho
\le
\rho_{\mathrm{safe}}.
\label{eq:rho_interval}
\end{equation}
Then $\Omega_\rho$ is forward invariant under every
admissible switching signal $\sigma(t)$ and satisfies
\begin{equation}
\Omega_\rho\subseteq\mathcal{Z}_{\mathrm{safe}}.
\end{equation}
Consequently,
\begin{equation}
\mathbf z(0)\in\Omega_\rho
\quad\Longrightarrow\quad
\mathbf z(t)\in\mathcal{Z}_{\mathrm{safe}},
\qquad \forall t\ge0,
\end{equation}
under any switching policy whose actions belong to
$\mathcal A$. For the MDP formulation in Section~II, we therefore identify the
certified state space as
\begin{equation}
\mathcal X := \Omega_\rho .
\label{eq:X_certified}
\end{equation}
\end{corollary}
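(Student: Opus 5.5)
The plan is to avoid a tangency or boundary argument and to integrate the scalar differential inequality \eqref{eq:V_ISS} from the proof of Theorem~1 directly. That inequality holds with constants independent of the active mode. Set $c \triangleq \alpha/\lambda_{\max}(\mathbf P)>0$.

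\textbf{Well-posedness along switched trajectories.} Fix any admissible switching signal $\sigma(t)$ taking values in $\{1,\ldots,N\}$, assumed piecewise constant with finitely many switches on bounded intervals. This covers the sample-and-hold DQN decisions, which are constant between decision instants $t_k$. The solution $\mathbf z(t)$ is then continuous and piecewise $C^1$, so $t\mapsto V(\mathbf z(t))$ is absolutely continuous. Its derivative exists except at switching instants, and there it satisfies \eqref{eq:V_ISS} regardless of which $\mathbf K_i$ is active.

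\textbf{Comparison bound.} Combining \eqref{eq:V_ISS} with \eqref{eq:reference_bound} gives, almost everywhere,
\[
\dot V \le -cV + \beta\bar r_4^2 .
\]
By the comparison lemma (equivalently, multiply by $e^{ct}$ and integrate the absolutely continuous function $e^{ct}V$),
\[
V(\mathbf z(t)) \le e^{-ct}V(\mathbf z(0)) + \frac{\beta\bar r_4^2}{c}\bigl(1-e^{-ct}\bigr).
\]
The right-hand side is a convex combination of $V(\mathbf z(0))$ and $\beta\bar r_4^2/c = \lambda_{\max}(\mathbf P)\beta\bar r_4^2/\alpha$. Both quantities are at most $\rho$: the first because $\mathbf z(0)\in\Omega_\rho$, the second by the left inequality in \eqref{eq:rho_interval}. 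Hence $V(\mathbf z(t))\le\rho$ for all $t\ge0$, i.e.\ $\Omega_\rho$ is forward invariant. Since $\sigma$ was arbitrary, the conclusion holds uniformly over switching policies with actions in $\mathcal A$, including $\epsilon$-greedy exploration.

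\textbf{Containment in the safe set.} I will verify $\Omega_\rho\subseteq\mathcal Z_{\mathrm{safe}}$ for $\rho\le\rho_{\mathrm{safe}}$. For each $\ell$, maximize $\hat{\mathbf e}_\ell^\top\mathbf z$ over the ellipsoid $\mathbf z^\top\mathbf P\mathbf z\le\rho$. Cauchy--Schwarz in the $\mathbf P$-inner product, $|\hat{\mathbf e}_\ell^\top\mathbf z| = |(\mathbf P^{-1}\hat{\mathbf e}_\ell)^\top\mathbf P\mathbf z|$, gives
\[
|z_\ell|\le\sqrt{\rho\,\hat{\mathbf e}_\ell^\top\mathbf P^{-1}\hat{\mathbf e}_\ell}.
\]
By \eqref{eq:rho_safe}, $\rho\le\rho_{\mathrm{safe}}\le \bar z_\ell^2/(\hat{\mathbf e}_\ell^\top\mathbf P^{-1}\hat{\mathbf e}_\ell)$, so $|z_\ell|\le\bar z_\ell$. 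Chaining this with invariance yields $\mathbf z(0)\in\Omega_\rho\Rightarrow\mathbf z(t)\in\mathcal Z_{\mathrm{safe}}$ for all $t\ge0$. Restricting to sampling instants, $\mathbf z_k\in\Omega_\rho$ for all $k$. This gives \eqref{eq:invariance_action} for $F$ with $\mathcal X:=\Omega_\rho$, which justifies \eqref{eq:X_certified}.

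\textbf{Main obstacle.} The delicate step is the first one: ensuring that \eqref{eq:V_ISS} can be integrated across switching instants. The Lyapunov function is common and $\mathbf z$ is continuous, so $V$ does not jump at switches; this is exactly what the common $\mathbf P$ buys, and is the point where mode-dependent certificates would fail. I would state explicitly that switching signals are assumed to have no Zeno behavior, which is automatic for the discrete-time scheduler. The interval \eqref{eq:rho_interval} must also be nonempty, but that is an assumption of the corollary rather than something to prove.
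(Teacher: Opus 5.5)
Your proof is correct, but it takes a different route from the paper's. The paper argues geometrically. It evaluates the dissipation inequality only on the level set $V(\mathbf z)=\rho$, where $\|\mathbf z\|^2\ge\rho/\lambda_{\max}(\mathbf P)$ gives $\dot V\le 0$ for every mode. It then concludes that each mode's vector field points inward or is tangent to $\partial\Omega_\rho$, and notes that $V$ does not jump at switching instants. The inclusion $\Omega_\rho\subseteq\mathcal Z_{\mathrm{safe}}$ is simply asserted from the definition of $\rho_{\mathrm{safe}}$.

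You instead integrate $\dot V\le -cV+\beta\bar r_4^2$ along the whole trajectory. You then observe that the comparison bound is a convex combination of $V(\mathbf z(0))$ and $\lambda_{\max}(\mathbf P)\beta\bar r_4^2/\alpha$.

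Your version is the more rigorous of the two. Passing from ``$\dot V\le 0$ on the boundary, tangency allowed'' to invariance silently invokes a Nagumo-type sub-tangentiality theorem, for a time-varying switched field and with uniqueness of solutions. The paper's switching step also presumes isolated switching instants. Your absolute-continuity argument needs neither. It works verbatim for any measurable switching signal with Carath\'eodory solutions, so the non-Zeno caveat in your last paragraph is not actually required by your own proof.

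Your route also delivers more. It gives an explicit trajectory bound showing that every sublevel set with $\rho\ge\lambda_{\max}(\mathbf P)\beta\bar r_4^2/\alpha$ is invariant. It also shows $\limsup_{t\to\infty}V(\mathbf z(t))\le\lambda_{\max}(\mathbf P)\beta\bar r_4^2/\alpha$ with exponential rate $c$. And your Cauchy--Schwarz computation in the $\mathbf P$-inner product actually justifies $\Omega_\rho\subseteq\mathcal Z_{\mathrm{safe}}$, which the paper only asserts.

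What the paper's boundary argument buys in return is locality. It uses only the sign of $\dot V$ on $\partial\Omega_\rho$, so it would still apply if the dissipation estimate were known only near the boundary rather than on all of $\Omega_\rho$.
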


\begin{proof}
On the boundary $V(\mathbf z)=\rho$,
\begin{equation}
\|\mathbf z\|^2
\ge
\frac{\rho}{\lambda_{\max}(\mathbf P)}.
\end{equation}
Using the dissipation inequality in Theorem~1 and
$\|\mathbf r_d^{(4)}(t)\|\le\bar r_4$ gives
\begin{equation}
\dot V
\le
-\frac{\alpha}{\lambda_{\max}(\mathbf P)}\rho
+
\beta\bar r_4^2
\le 0,
\end{equation}
where the last inequality follows from the lower bound in
\eqref{eq:rho_interval}. Hence, the vector field of every
admissible mode points inward or is tangent to the boundary
of $\Omega_\rho$.

At every switching instant $t_j$,
\begin{equation}
V(t_j^+)=V(t_j^-),
\end{equation}
because the state is continuous and the same Lyapunov
function is shared by all modes. Therefore, neither continuous
evolution nor switching can drive the state outside
$\Omega_\rho$, establishing forward invariance under arbitrary
switching. Finally, the upper bound
$\rho\le\rho_{\mathrm{safe}}$ guarantees
$\Omega_\rho\subseteq\mathcal{Z}_{\mathrm{safe}}$ by
\eqref{eq:rho_safe}. Thus, every trajectory initialized in $\Omega_\rho$
remains in the prescribed tracking-error admissible set
$\mathcal Z_{\mathrm{safe}}$ for all $t\geq0$.
\end{proof}

\subsection{Nonlinear Closed-loop Dynamics}
Through dynamic inversion, the external input $\mathbf{s}$ is mapped to the physical control input $\mathbf{u}$ via
\[
\mathbf{s} = \mathbf{M}(\mathbf{x})\mathbf{u} + \mathbf{n}(\mathbf{x}),
\]
so that
\[
\mathbf{u} = \mathbf{M}^{-1}(\mathbf{x})\bigl(\mathbf{s} - \mathbf{n}(\mathbf{x})\bigr).
\]
Substituting this into the control-affine dynamics yields
\begin{equation}\label{dyn8}
\dot{\mathbf{x}} = \mathbf{f}(\mathbf{x}) + \mathbf{G}(\mathbf{x})\mathbf{k},
\end{equation}
which defines the nonlinear closed-loop system parameterized by $\mathbf{k}$.
 \noindent\textit{Inversion-domain assumption.} For nominal trajectories,
  $\sigma_{\min}(\mathbf M(\mathbf x))\ge\underline\sigma>0$, the Euler map
  stays nonsingular, and $\mathbf u=\mathbf h(\mathbf x,\mathbf k^{(i)})$
  does not saturate. Under these conditions and exact nominal-model matching,
  inversion realizes Theorem~1's external dynamics, so the result transfers
  through $\mathbf z(\mathbf x)$. Delay, mismatch, saturation, and singularity
  are outside this implication; their MuJoCo tests are empirical.

% Define the admissible set
% \[
% \mathcal{X}=\{x:\|\mathbf{z}(x)\|\le \delta\}.
% \]

% \subsection{Control-Oriented Discrete-Time Dynamics}
\begin{remark}
\label{rem:discrete_invariance}
At each decision instant $t_k$, the policy selects a controller
$\mathbf K_i\in\mathcal A$, which is held constant over
$[t_k,t_{k+1})$. Thus, the continuous-time dynamics
\eqref{eq:switched_error_dynamics} induce the discrete transition
\begin{equation}
\mathbf z_{k+1}
=
F(\mathbf z_k,a_k),
\qquad a_k\in\mathcal A,
\label{eq:certified_discrete_transition}
\end{equation}
where $\mathbf z_k=\mathbf z(t_k)$. With the certified MDP state
space $\mathcal X:=\Omega_\rho$, Corollary~\ref{cor:forward_invariance}
implies
\begin{equation}
\mathbf z_k\in\mathcal X
\quad\Longrightarrow\quad
\mathbf z_{k+1}\in\mathcal X,
\qquad \forall a_k\in\mathcal A.
\label{eq:discrete_invariance}
\end{equation}
Hence, every admissible action preserves the certified state space
at the RL decision instants.
\end{remark}
\section{DQN-based Safe Learning}
\label{DQN-based Safe Learning}

Given the certified action set $\mathcal A$, policy optimization is
formulated as a discrete-action RL problem. For the quadcopter case
study, the MDP state is the external tracking-error state
$\mathbf z_k=\mathbf z(t_k)\in\mathcal X=\Omega_\rho$. The stage
reward is
\begin{equation}
\begin{split}
r_k
=&-
\Big(
w_r \|\mathbf e_r\|^2
+ w_v \|\mathbf e_v\|^2
+ w_\eta \|\boldsymbol{\eta}\|^2
+ w_\omega \|\boldsymbol{\omega}\|^2
\Big)\\
&-w_u \|\mathbf u\|^2
-w_s\mathbf 1\{a_k\neq a_{k-1}\}.
\end{split}
\label{eq:reward}
\end{equation}
The terms associated with $\mathbf e_r$ and $\mathbf e_v$ penalize
tracking error and residual translational motion, while the attitude,
angular-rate, and control-effort terms discourage aggressive vehicle
motion and excessive actuation. The switching penalty discourages
frequent changes between controllers. Importantly, no safety penalty
is included in the reward: forward invariance is guaranteed by the
certified action set independently of the reward design and learned
policy.

The action-value function is approximated by a neural network
$Q(\mathbf z,a;\theta)$, where $\theta$ denotes the trainable
parameters. At decision instant $k$, the action is selected according
to an $\epsilon$-greedy policy,
\begin{equation}
a_k=
\begin{cases}
\text{a random action in }\mathcal A,
& \text{with probability }\epsilon,\\[1mm]
\displaystyle
\arg\max_{a\in\mathcal A}Q(\mathbf z_k,a;\theta),
& \text{with probability }1-\epsilon.
\end{cases}
\label{eq:epsilon_greedy}
\end{equation}

\begin{figure*}[!t]
      \centering
      \includegraphics[width=\textwidth]
          {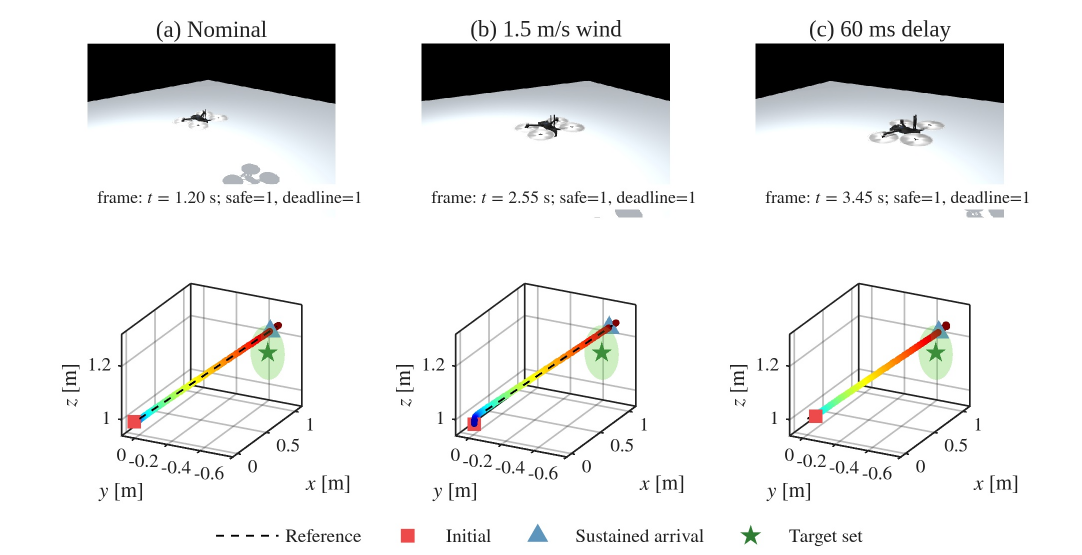}
      \vspace{-1.5mm}
      \caption{Matched MuJoCo quadrotor executions under nominal conditions,
      1.5-m/s wind, and 60-ms sensing delay. Each fixed-world rendering and
      3-D trajectory pair comes from the same proposed-DQN rollout. Color
      advances with time from launch to sustained arrival; dashed curves show
      the reference, red squares the initial position, blue triangles sustained
      arrival, and green spheres and stars the terminal set and its center.}
      \label{fig:cdc81_demo}
  \end{figure*}

\section{Experimental Evaluation}
\label{subsec:sim_setup}

We evaluate nominal performance, scheduling, and robustness.
Four questions guide the experiments: \textbf{Q1}, common certification over
the initial set; \textbf{Q2}, benefits of state-dependent scheduling;
\textbf{Q3}, the distinction between common certification and individual-mode
stability; and \textbf{Q4}, behavior beyond the nominal model.
Figure~\ref{fig:cdc81_demo} previews representative proposed-DQN executions
under the nominal, wind, and sensing-delay conditions examined below.

\subsection{MuJoCo plants, task, and frozen protocol}

The common evaluation setting addresses \textbf{Q1} by testing the controller
library's certificate over the prescribed initial set. The primary experiment uses a nonlinear MuJoCo rigid body whose nominal mass
and inertia, $1.5$~kg and $\operatorname{diag}(.02,.02,.04)$~kg,m$^2$, match
the controller model. RK4 integration uses a 2-ms step, and the controller is
updated every 20~ms. Each 10-s episode tracks a ninth-order smoothstep from
$(0,0,1)$ to $(1,-.5,1.25)$~m during the first 5~s and then holds the terminal
position. Its first four derivatives vanish at both endpoints. Initial errors
satisfy $|e_{r,x}|,|e_{r,y}|\leq.05$~m, $|e_{r,z}|\leq.03$~m, and
$|e_{v,i}|\leq.04$~m/s, with all higher-order initial errors set to zero.

The task requires the vehicle to enter
$|r-r_f|\leq.10$~m and $|v|\leq.20$~m/s by 3.66~s and subsequently
maintain precision hover. Empirical physical safety requires altitude at least
$.15$~m, lateral distance at most $6$~m, tilt at most $1.05$~rad, and finite
computed states. We report this metric, deadline satisfaction,
sustained-arrival time, conditional post-arrival position RMSE, actuator
saturation, maximum $V/\rho$, and switching count. The proposed library combines translational root scales
${.8,1,1.2}$ with three yaw-root pairs, yielding 81 gain vectors. The
residual-adjusted common-certificate margin is $q=.00606588$, with sufficient
sublevel value $\rho=2.27167$. Evaluation of all 64 corners of the initial-error
box gives $\max V(z_0)/\rho=.001141<1$, confirming that the prescribed initial
set lies inside the certified sublevel set. The scheduler input contains the 14 error states, an 81-dimensional one-hot
encoding of the previous action, and normalized reference progress. Each
selected library action is retained for five controller updates (0.10~s).

We also test the certificate near its boundary: at each
$V(z_0)/\rho\in\{.05,.25,.50,.75,.90\}$, 20 physically consistent states
are constructed by jointly scaling position, velocity, attitude, angular rate,
and thrust states. The median fixed controller, five DQN checkpoints, and
random certified switching give 700 nominal rollouts. All remain physically
safe with $\max_tV/\rho\le1$ and zero saturation; sampled diagnostics give
$\sigma_{\min}(\mathbf M)\ge.402$, condition number $\le28.92$, and tilt
$\le.633$~rad. This is not a global numerical proof of the inversion-domain
assumption.
Table~\ref{tab:arrival_training} summarizes the frozen learning protocol.
Training, checkpoint selection, and all hyperparameter choices were completed
before evaluation on the 40 paired test scenarios.

\begin{table}[t]
\centering\footnotesize
\caption{Frozen learning and evaluation protocol for the arrival task.}
\label{tab:arrival_training}
\setlength{\tabcolsep}{3pt}
\begin{tabularx}{\columnwidth}{@{}lX@{}}
\toprule
Item & Setting\\
\midrule
Observation & $14+81+1=96$ variables\\
Network/output & ReLU $96$--$256$--$256$--$81$\\
Initialization & 1500 cloning minibatches; 20 teacher episodes\\
Fine tuning & five seeds; 60,000 Double-DQN interactions each\\
Optimization & Adam $10^{-3}$; $\gamma=.99$; batch 256; gradient norm 10\\
Replay/target & capacity 200k; target update every 1000 steps\\
Exploration & linear $\epsilon:.10\rightarrow.01$; 5\% teacher mixture\\
Selection & safety, deadline rate, hover RMSE, then constrained cost\\
Splits & development 304000--305039; validation 306000--306019; test 307000--307039\\
\bottomrule
\end{tabularx}
\end{table}

\subsection{Main comparison: safety, performance, and scheduling}

The main comparison addresses \textbf{Q2} and \textbf{Q3} by separating the
effects of state-dependent scheduling and common certification. The fixed baseline
uses the median-library action 40, selected before evaluation because it is the
central row of the $81$-action construction: all translational root scales are
one and the yaw roots are $(2,6)$. The proposed DQN selects among all 81
certified actions. Because both policies use the same common-certified library,
their comparison isolates the benefit of state-dependent scheduling relative to
a pre-specified, non-adaptive nominal tuning.

To examine the role of the common certificate, an otherwise matched DQN
ablation replaces only the gain library. Its root scales are
${.6,1,1.6}$, and its independently selected fast and steady actions are 36
and 80. All 81 ablation modes are Hurwitz (largest spectral abscissa $-.6$),
but the same trace-normalized common-$P$ program returns margin
$-1.56\times10^{-5}$. Thus, individual controller stability alone does not provide
the positive common quadratic margin established for the proposed library.
This calculation does not rule out other possible switching certificates.

All three policies are safe and meet the deadline in every nominal test
rollout. Relative to the median/nominal fixed
controller, the proposed DQN advances sustained arrival by .0091~s and reduces
hover RMSE from $.013326$ to $.010937$~m, a 17.93\% reduction (paired mean
difference $.002389$~m; 95\% interval $[.002199,.002579]$~m;
$p=6.99\times10^{-26}$). Across the five frozen checkpoints, the scheduling
policy exhibits a measurable performance--switching trade-off. Thus, the
results demonstrate a repeatable advantage of certified state-dependent
scheduling over a representative non-adaptive nominal tuning.

To quantify the effect of switching at execution time, we additionally
evaluated the frozen policies on 40 previously unused paired scenarios
(seeds 313000--313039), without retraining or checkpoint selection.
For each scenario, the DQN metric was averaged over the five frozen
checkpoints before paired comparison with the fixed controller.
Figure~\ref{fig:arrival_dwell_tradeoff} shows that the .10-s DQN reduces hover RMSE by
18.65\% (mean difference $.002485$~m; paired 95\% interval
$[.002243,.002728]$~m; $p=1.10\times10^{-22}$). Increasing the minimum hold
to .20~s reduces mean switching by 46\% while retaining a 16.59\% RMSE
reduction (mean difference $.002210$~m; paired 95\% interval
$[.001948,.002473]$~m; $p=1.11\times10^{-19}$). At .50~s, the further
reduction in switching is accompanied by a substantially smaller tracking
margin. Thus, the hold interval provides an explicit execution-level
tracking--switching trade-off; the common certificate itself does not rely on
dwell time.

\begin{figure}[!t]
\centering
\includegraphics[width=.98\columnwidth]{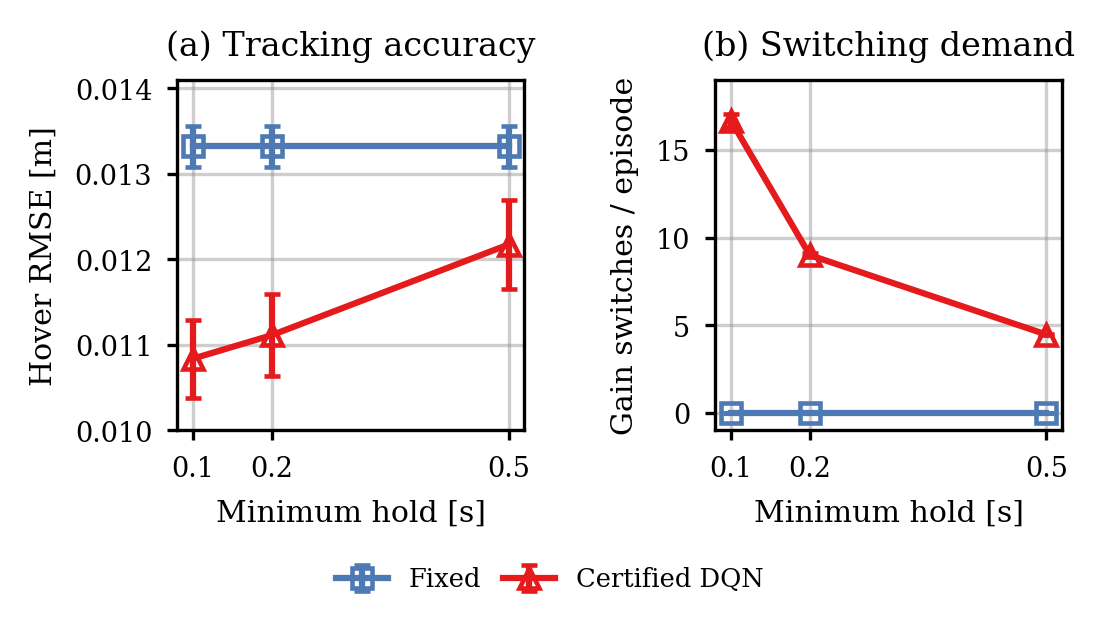}
\caption{Fresh-seed dwell-time ablation. Markers and error bars denote means
and paired-scenario 95\% confidence intervals, respectively; each DQN value
is first averaged over the five frozen checkpoints within each scenario. The
.20-s hold retains a tracking advantage over the median/nominal fixed controller
while reducing DQN switching relative to the .10-s hold.}
\label{fig:arrival_dwell_tradeoff}
\end{figure}

The uncertified-library DQN also succeeds nominally. Its nominal outcome does
not supply the missing common-certificate result: among the two tested
libraries, only the proposed library carries the stated arbitrary-switching
guarantee for the ideal model. The nonideal tests below evaluate whether this
analytical distinction is accompanied by empirical separation.

\subsection{Robustness and empirical applicability boundary}

The robustness tests address \textbf{Q4} beyond the nominal certificate. The robustness evaluation reuses 40 paired scenario seeds under nominal
operation, wind from 0 to 3~m/s, plant mismatch, sensing noise, 60-ms delay,
and a joint condition. Plant mismatch perturbs mass, inertia, and actuator
effectiveness; the joint condition additionally combines 1.5-m/s wind, noise,
and 60-ms delay. Each deterministic baseline contributes 40 rollouts per
condition, while each DQN row aggregates five frozen checkpoints for 200
rollouts. Safety and deadline rates use all rollouts, while post-arrival RMSE
is reported only for trajectories that achieve sustained arrival.

\begin{table*}[!t]
\centering
\caption{\textbf{Held-out nominal and nonideal safety/arrival performance.}
Arrows indicate the preferred direction: higher safety and deadline rates
($\uparrow$), and lower conditional post-arrival RMSE ($\downarrow$).}
\label{tab:arrival_robustness}
\renewcommand{\arraystretch}{1.08}
\setlength{\tabcolsep}{2.5pt}
\resizebox{\textwidth}{!}{%
\begin{tabular}{l ccc ccc ccc ccc ccc ccc}
\toprule
\multirow{2}{*}{\textbf{Method}} &
\multicolumn{3}{c}{\textbf{Nominal}} &
\multicolumn{3}{c}{\textbf{Wind 2 m/s}} &
\multicolumn{3}{c}{\textbf{Model mismatch}} &
\multicolumn{3}{c}{\textbf{Sensor noise}} &
\multicolumn{3}{c}{\textbf{Delay 60 ms}} &
\multicolumn{3}{c}{\textbf{Joint}}\\
\cmidrule(lr){2-4}\cmidrule(lr){5-7}\cmidrule(lr){8-10}
\cmidrule(lr){11-13}\cmidrule(lr){14-16}\cmidrule(lr){17-19}
& Safe & Dead. & RMSE & Safe & Dead. & RMSE & Safe & Dead. & RMSE &
Safe & Dead. & RMSE & Safe & Dead. & RMSE & Safe & Dead. & RMSE\\
\midrule
Median/nominal fixed &100&100&.01322&100&100&.12529&100&20.0&.03933&100&100&.01377&100&100&.01414&100&10.0&.08933\\
\textbf{Proposed DQN} &100&100&.01067&100&82.0&.09932&100&22.5&.04978&100&100&.01140&\textbf{100}&\textbf{100}&.01275&\textbf{100}&12.5&.08327\\
Uncertified DQN &100&100&.01213&100&100&.12527&100&6.5&.03593&100&100&.01264&2&81.5&.32976&0&0&--\\
\bottomrule
\end{tabular}%
}

\begin{minipage}{.98\textwidth}
\footnotesize\textit{Notes.} Safe and Dead. denote percentages. RMSE is in
meters and is conditional on sustained arrival; ``--'' indicates that no
rollout produced a post-arrival segment. DQN entries aggregate five frozen
checkpoints on each of the same 40 scenario seeds.
\end{minipage}
\end{table*}
Table~\ref{tab:arrival_robustness} shows that delay and combined perturbations
produce the clearest separation. With 60-ms delay, both policies using the
common-certified library remain safe in every rollout,
whereas the uncertified DQN is safe in only 2\% and meets the deadline in
81.5\%. Under the joint condition, the median fixed and proposed DQN remain
safe in every run, whereas the uncertified DQN has zero safety and deadline
satisfaction. These results are empirical and do not extend the ideal-model
certificate to delayed or mismatched dynamics.

Wind does not trigger physical-safety termination through 3~m/s, although
deadline performance degrades before that boundary.
Figure~\ref{fig:arrival_robustness}(a) shows the 0--2~m/s range, while panels
(b) and (c) compare safety and joint completion across nonideal conditions.
Figure~\ref{fig:arrival_boundary} separately reports near-boundary nominal
Lyapunov histories. At 2.5~m/s,
deadline satisfaction degrades for every policy, while no method meets the
deadline at 3~m/s. We therefore treat the higher wind levels as an empirical
task-performance boundary. For all nonideal tests, $V/\rho$ is used only as a
diagnostic computed from the nominal certificate.

\begin{figure}[!t]
\centering
\includegraphics[width=.98\columnwidth]{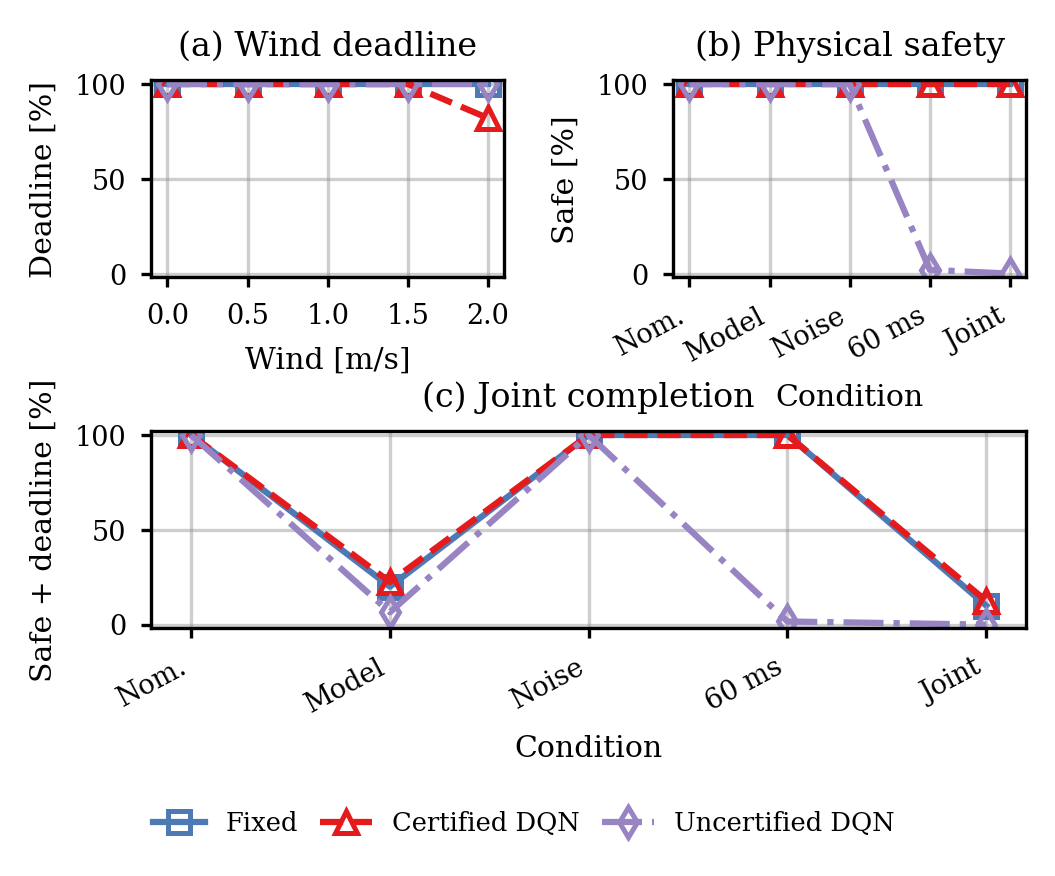}
\caption{Empirical robustness beyond the nominal certificate. (a) Deadline
satisfaction under 0--2~m/s wind. (b) Physical safety and (c) joint
safe-and-deadline completion under the principal nonideal conditions.}
\label{fig:arrival_robustness}
\end{figure}

\begin{figure}[!t]
\centering
\includegraphics[width=.98\columnwidth]{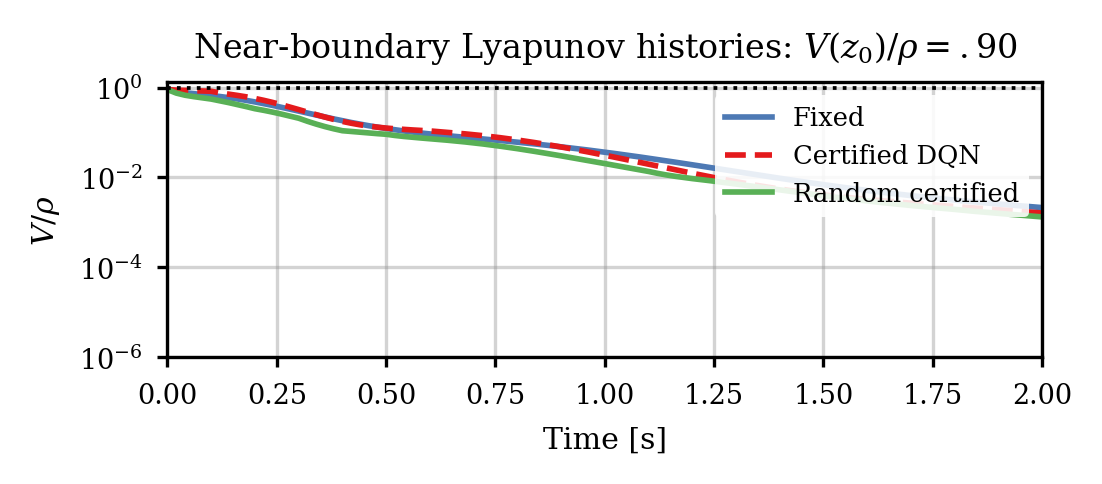}
\caption{Nominal near-boundary Lyapunov histories from one common,
physically consistent initial state with $V(z_0)/\rho=.90$. Fixed, learned,
and random certified switching remain below the certificate boundary
$V/\rho=1$ while exhibiting distinguishable decay transients. This rollout
illustrates the policy-class result; the 700-rollout containment summary is
reported in the text.}
\label{fig:arrival_boundary}
\end{figure}

A separate 100-ms-delay stress test marks an empirical failure boundary. No
policy remains physically safe, and the rollouts exhibit saturation and large
excursions in the nominal $V/\rho$ diagnostic.

\section{Conclusion}\label{Conclusion}

This paper presented an RL framework for safe gain scheduling based on
forward-invariance-induced action-space design. By constructing
a finite controller library sharing a common Lyapunov
certificate, the framework establishes, under its nominal assumptions,
forward invariance of a prescribed admissible set under arbitrary switching. Safety is
therefore a property of the admissible policy class rather than
of the learned policy itself: learning optimizes performance
within the certified class without requiring runtime safety
filters, action projection, or dwell-time constraints. The quadcopter case study demonstrated this separation between
certification and policy optimization. The DQN performed
state-dependent gain scheduling over the certified controller
library while preserving the nominal invariance guarantee, and
nonlinear MuJoCo simulations empirically examined robustness
beyond the assumptions of the certificate. For the considered
arrival task, the learned scheduler improved tracking relative
to the pre-specified nominal fixed controller while the
dwell-time ablation exposed its explicit execution-level
switching trade-off. The principal benefit of the proposed
formulation is the ability to optimize policies within a
policy-independent safety certificate. The common quadratic certificate is sufficient but potentially
conservative, since individually stabilizing controllers may
fail to admit a common $\mathbf P$. Future work will investigate
less conservative invariance certificates, larger or continuous
action spaces, and extensions to autonomous driving with
safety-certified steering and braking under lane-keeping,
obstacle-avoidance, and vehicle-stability constraints.

\providecommand{\BIBdecl}{}
\renewcommand{\BIBdecl}{%
  \footnotesize
  \setlength{\itemsep}{0pt}%
  \setlength{\parsep}{0pt}%
  \setlength{\parskip}{0pt}%
}
\bibliographystyle{IEEEtran}
\bibliography{CDC-Ref}

\end{document}